\newtheorem{theorem}{Theorem}
\newtheorem{lemma}{Lemma}
\newtheorem{remark}{Remark}
\newcommand{\cco}{{\rm co}}
\DeclareMathOperator{\interior}{int}
\title{Controllability Analysis of Multi-Modal Acoustic Particle Manipulation in One-Dimensional Standing Waves}
\author{Dongjun Wu$^{1}$, Guilherme Perticarari$^{2}$, Thierry Baasch$^{3}$
	\thanks{*This project has received funding from the European Research Council (ERC) under the European Union's Horizon 2020 research and innovation programme under grant agreement No 834142 (ScalableControl) and from the Swedish Research Council (No. 2022-04041)}
	\thanks{$^{1}$ D. Wu is with Department of Automatic Control, Lund
	University, Box 118, SE-221 00 Lund, Sweden {\tt\small
	dongjun.wu@control.lth.se}.}
	\thanks{$^{2}$ G. Perticarari is a student in the Master's in Machine Learning, Systems and Control programme at Lund University {\tt\small
	gj.perticarari@gmail.com}}
	\thanks{$^{3}$ T. Baasch is an Assistant Professor at the Department of Biomedical Engineering, Lund University, Box 118, SE-221 00 Lund, Sweden {\tt\small
	thierry.baasch@bme.lth.se}. }}
\begin{document}
\maketitle
\begin{abstract}
Acoustic manipulation in microfluidic devices enables contactless handling of biological cells for Lab-on-Chip applications. This paper analyzes the controllability of multi-particle systems in a one-dimensional acoustic standing wave system using multi-modal actuation. By modeling the system as a nonlinear control system, we analyze its global and local controllability, quantifying these properties in terms of mode numbers. Our results show that sufficient modes enable dense reachability sets, while mode mixing with 10 modes grants a strict notion of controllability to 80\% of the state space in a two-particle system. 
These findings offer theoretical insights for designing acoustic manipulation algorithms, supporting efficient control in biomedical applications.
\end{abstract}
{\keywords Biotechnology, Acoustic manipulation, Controllability analysis}

\section{Introduction}

Lab-on-Chip (LOC) is an active research field in biomedical technology aiming to miniaturize and automate applications. A promising category of LOCs relies on acoustic manipulation for the contactless handling of biological cells and micro-organisms within microfluidic devices. Applications of the technology include, for example, the separation of different types of Leukocyte sub-populations \cite{urbansky2019label} or circulating tumor cells from whole blood \cite{magnusson2024acoustic}. 
Typically, a single acoustic resonance mode is used for the manipulation of bulk biological material. Recently, a new technique based on multi-modal actuation has emerged. In fact, it has been shown experimentally that many resonance modes can be employed to gain control of individual particle paths, see e.g. \cite{zhou2016controlling, shaglwf2019acoustofluidic, schrage2023ultrasound, yiannacou2023acoustic} potentially enabling single-cell manipulation and optical trap-like functionality at much highly reduced cost.

Although several experimental works are presented in the literature, the corresponding theoretical investigations are limited. Möller \cite{moller2013acoustically} showed theoretically that continuous frequency sweeping allows pushing particles towards a chamber wall. Glynne-Jones et al. \cite{glynne2010mode} showed theoretically that mode-mixing between two modes creates a new stable equilibrium between the stable equilibria of the original modes. 

From a control-theoretic perspective, this system raises an interesting question regarding its controllability, i.e., which particle configurations can be achieved by a given number of modes. Answering this question is crucial for understanding which applications are viable with a reasonable number of modes of vibration. This paper addresses controllability for the motion of one or more particles in an ideal one-dimensional (1D) acoustic standing wave defined by a rigidly walled chamber. This process is depicted in Fig \ref{fig:schematic}, where a two-particle system changes its configuration due to a 1D standing wave being triggered from $t_1$ to $t_2$. Although one-dimensional fields are selected here for their relative simplicity, they are useful, and applications have been presented in the literature \cite{glynne2010mode, moller2013acoustically}. To the best of our knowledge, the only theoretical statement made in the literature regarding controllability and reachability is that the minimum number of modes required for full controllability must be at least equal to the dimension of the state space plus one \cite{zhou2016controlling, shaglwf2019acoustofluidic}. This statement follows trivially from our analysis in section \ref{Sec:locallyControl}. 

\begin{figure}[h]
    \centering
    \includegraphics[width=\linewidth]{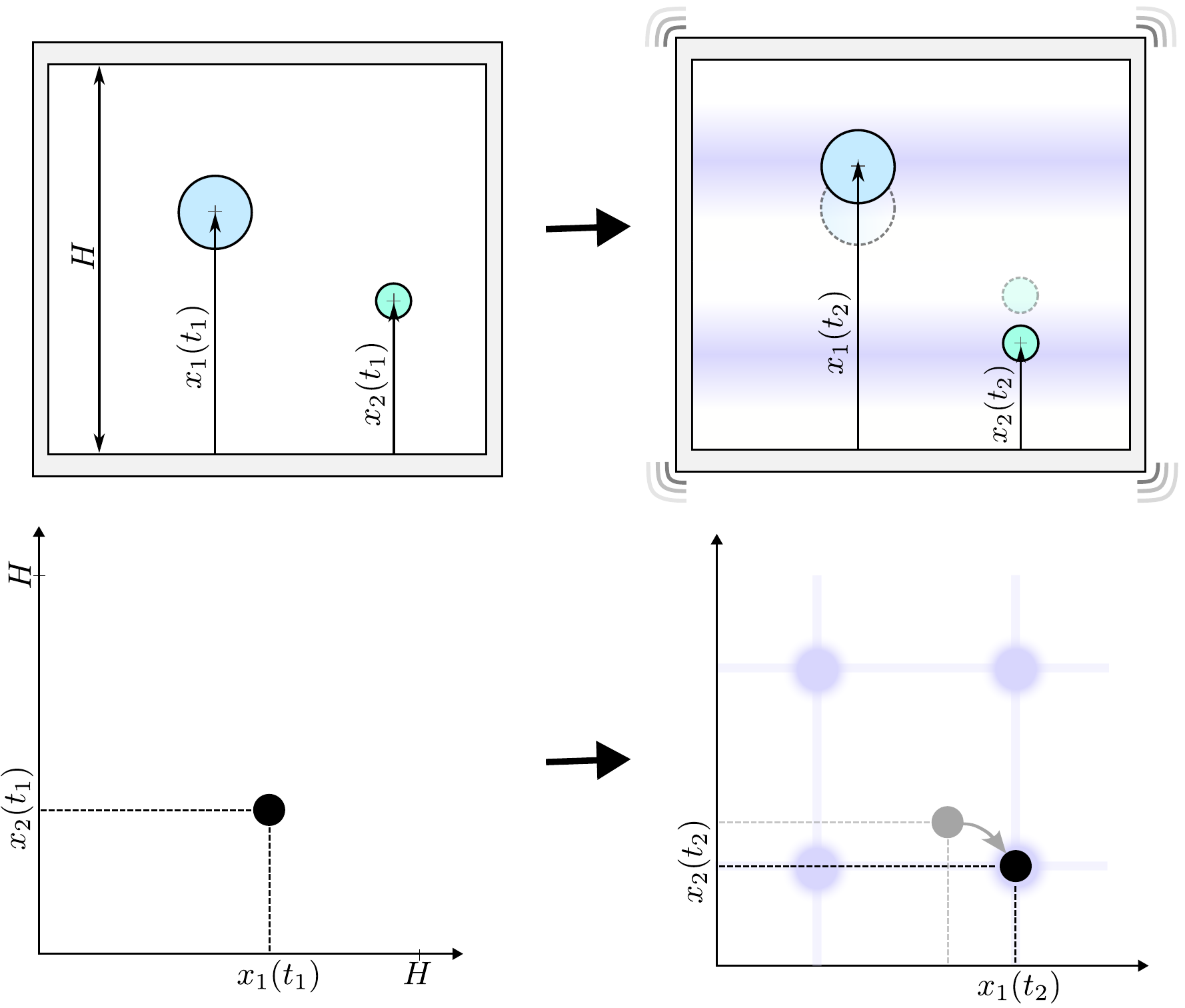}
    \caption{Schematic of the acoustic manipulation process. \textbf{(Top)} The system's configuration changes due to a 1D mode of vibration being played from $t_1$ to $t_2$. \textbf{(Bottom)} This process is shown from a state-space point-of-view.} 
    \label{fig:schematic}
\end{figure}

Our contributions are twofold:
\begin{itemize}
\item We establish global controllability by constructing a controllability graph for assignable stable equilibria, demonstrating that a sufficient number of modes ensures dense reachability sets within specified regions.
\item We analyze local controllability through mode mixing, identifying regions of locally controllable states via simulations, and quantify the impact of mode numbers on controllability in a two-particle system.
\end{itemize}

{\it Notations:} We collect some of the notations that will be used in the text. Given a subset $S\subseteq \mathbb{R}^n$, ${\rm co} (S)$ denotes the convex hull (the smallest convex set containing $S$) of $S$, and ${\rm int}(S)$ the set of interior points in $S$ (the largest open set in $S$). ${\rm \mathbf{N}} $ is the set of natural numbers. For a given natural number $n$, the set ${\rm I}_n$ denotes the set $\{1,2, \cdots , n\}$.

\section{Modeling and Problem Formulation} \label{model}
\subsection{One-dimensional (1D) acoustophoresis}
The acoustic radiation force acting on an isolated particle $i$ positioned at $x_i(t)$ of radius $a$ much smaller than the acoustic wavelength $\lambda$ is given by the negative gradient of the Gorkov potential \cite{gor1962forces, sapozhnikov2013radiation},
\begin{equation} \label{sys:grad:n_1d}
   F_{\mathrm{ac}}(x(t), y(t), z(t)) = - \nabla U(x(t), y(t), z(t)).
\end{equation}
In experiments, the geometry of the Gorkov potentials can be very complex and difficult to predict from first principle calculations. Therefore, in practice, measurements are required to build a reliable model of the system. If only 1D resonances are excited, such as shown experimentally in \cite{lamprecht2016imaging}, the situation is simpler, and some of the underlying phenomena can be modeled analytically to a reasonable approximation. In an ideal one-dimensional acoustic standing wave of mode $u \in \mathbf{N} $, the Gorkov potential for a particle $i \in [1, \, ... ,\, n]$ is given by 

\begin{equation*}
    U(x_i(t),u) = \frac{3}{2} V_{i} E_{\mathrm{ac},u} \Phi_{i} \cos(2\pi \frac{u}{H} x_i(t)) + \text{const.},
\end{equation*}
where we introduced the particle volume $V$, contrast factor $\Phi$, the energy of the mode $E_{\mathrm{ac},u}$, and the channel height $H$, which defines our domain ($x_i(t) \in [0, H]$), see \cite{bruus2012acoustofluidics} for details. 

The motion of the particle is then given by balancing inertia $m \ddot{x}_i(t)$ with acoustic $F_{\mathrm{ac}}$ and hydrodynamic $F_{sk}$ forces. If the particle is isolated, i.e. sufficiently far from other particles or channel walls, the interactions can be neglected, and the equation of motion for a particle $i$ reduces to 

\begin{equation}
m \ddot{x}_i(t) = F_{\mathrm{ac}} (x_i(t)) + F_{\mathrm{sk}} (\dot{x}_i(t)),
\end{equation}
where $m$ denotes the particle's mass and $F_{\mathrm{sk}} (\dot{x}_i(t))= - 6 \pi a_i \eta \dot{x}_i(t)$ is the Stokes' drag experienced by the particle in a fluid of viscosity $\eta$.

It can be shown that the dynamic motion is, under normal circumstances, dominated by the viscous effects and inertia can be neglected when computing the particle trajectories \cite{baasch2017multibody}.

The equation of motion for a particle $i$ exposed to mode $u$ becomes $F_{\mathrm{sk}}(\dot{x}_i(t)) = - F_{\mathrm{ac}}(x_i(t)) = - \nabla_x U(x_{i}(t),u)$, or
\begin{align}
    \dot{x}_i(t) = c_{i,u} u \sin\left( 2\pi  \frac{u}{H} x_i(t) \right)
    \label{eq:dynamics}
\end{align}with 
\begin{equation} \label{eq:x_i:phy}
    c_{i,u} = \frac{\pi a_i^2 \Phi_i E_{\mathrm{ac},u}}{2H\eta}
\end{equation}being a constant dependent on particle properties and mode number. 




The system is non-linear and multi-stable, in the sense that it has multiple stable and unstable equilibrium points. Fig. \ref{fig:pattern:k3} is an illustration of the potential function $U(x,3)$, namely, two particles under mode $u=3$. The red and blue marks denote stable and unstable equilibrium points, respectively. 

\begin{figure}[ht]
    \centering
    \includegraphics[scale=0.4]{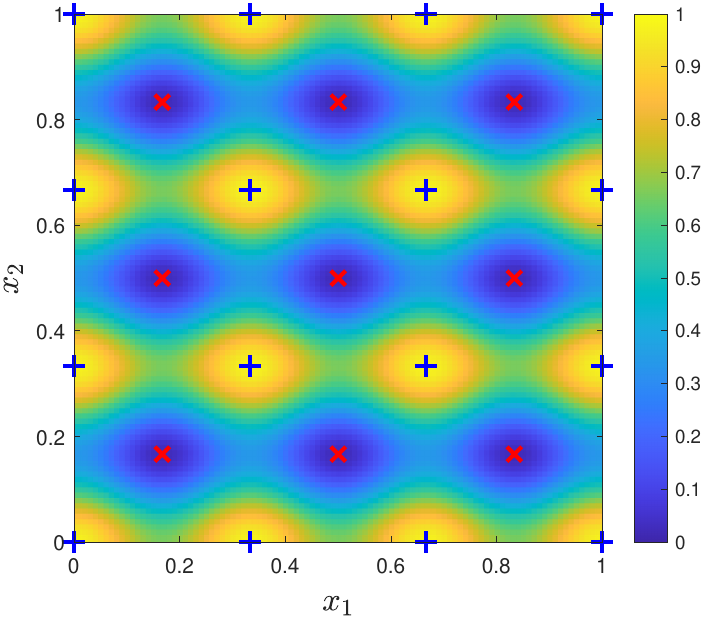}
    \caption{The acoustic force potential of a one-dimensional wave ($u=3$) is shown in the state space of two particles. Areas of high and low potential are shown in yellow and blue, respectively. The system trajectory will follow the negative gradient of the potential and stable equilibrium points are indicated by red crosses.}
    \label{fig:pattern:k3}
\end{figure}
 

\subsection{Mathematical model for 1D multi-particle systems}
To facilitate mathematical analysis, we scale all $x_i$ in \eqref{eq:x_i:phy} by a factor $1 / H$ so that they lie within the unit interval $[0,1]$. In doing so, we can write a system with $n$ particles as:
\begin{equation} \label{sys:n_1d}
\left\{
\begin{aligned}
\dot{x}_{1} & =A_{1}u\sin(2\pi ux_{1})\\
 & \vdots\\
\dot{x}_{n} & =A_{n}u\sin(2\pi ux_{n})
\end{aligned}
\right.
\end{equation}with $x = [x_1,\cdots,x_n]^\top \in [0,1]^n$, $u\in \mathrm{I}_N$, and $A_i$ some $u$-dependent constants. The system of ordinary differential equations \eqref{sys:n_1d} will be called the {\it control system} or {\it system} for short, $x=[x_1, \cdots, x_n]^\top$ the {\it state} and $u$ the {\it control input}.

We have intentionally suppressed the dependence of $A_i$ on the mode $u$ (it would have been written as $A_{i,u}$ instead). In fact, for controllability analysis, it is safe to assume that $A_i$ are independent of $u$ thanks to the following fact:
\begin{lemma} \label{lem:Xu X1}
    The solution $X^u(t,\xi_0) \in [0,1]^n$ to the system \eqref{sys:n_1d} with mode $u \in \mathbb{N}$ and initial condition $\xi_0 \in [0,1]^n$ is fully characterized by the solution under mode $u=1$:
    \begin{equation}
        X^u_i(t,\xi_0) = \frac{1}{u}(p_i + X^1_i(q_i t, r_i))
    \end{equation}where $p=\lfloor u\xi_0 \rfloor$ and $r= u\xi_0 - p$ are the integer and fractional parts of the vector $u\xi_0$ respectively, and $q_i = \frac{u^2A_{i,u}}{A_{i,1}}$.
\end{lemma}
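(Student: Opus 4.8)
The plan is to prove the identity by \emph{direct verification}: I will take the right-hand side of the claimed formula as a candidate trajectory, show it satisfies both the initial condition and the governing ODE under mode $u$, and then invoke uniqueness of solutions to conclude it coincides with $X^u$. Since the system \eqref{sys:n_1d} is decoupled across coordinates, it suffices to argue one component $i$ at a time.

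First I would check well-posedness of the formula as a point of $[0,1]^n$. The mode-$1$ dynamics $\dot X^1_i = A_{i,1}\sin(2\pi X^1_i)$ has $0,\tfrac12,1$ as equilibria (the zeros of $\sin(2\pi x)$), so $[0,1]$ is forward invariant; since the fractional part $r_i\in[0,1)$, the trajectory $X^1_i(\cdot,r_i)$ stays in $[0,1]$, and as $p_i=\lfloor u\xi_{0,i}\rfloor\in\{0,\dots,u-1\}$ for $\xi_{0,i}\in[0,1)$, the candidate $\tfrac{1}{u}(p_i+X^1_i)$ lands in $[0,1]$. Set $Y_i(t):=\tfrac{1}{u}\bigl(p_i+X^1_i(q_i t,r_i)\bigr)$. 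At $t=0$ one has $X^1_i(0,r_i)=r_i$, hence $Y_i(0)=\tfrac{1}{u}(p_i+r_i)=\tfrac{1}{u}(u\xi_{0,i})=\xi_{0,i}$, so the initial condition matches.

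The crux is the differentiation step. Using the chain rule and the mode-$1$ equation gives $\dot Y_i(t)=\tfrac{q_i}{u}A_{i,1}\sin\bigl(2\pi X^1_i(q_i t,r_i)\bigr)$. The key algebraic manoeuvre is to re-express the argument through $X^1_i(q_i t,r_i)=u Y_i(t)-p_i$, so that $\sin(2\pi X^1_i)=\sin(2\pi u Y_i-2\pi p_i)=\sin(2\pi u Y_i)$, where the integer shift $p_i$ drops out precisely because $\sin$ is $2\pi$-periodic. Substituting the prescribed $q_i=u^2 A_{i,u}/A_{i,1}$ collapses the prefactor to $\tfrac{q_i}{u}A_{i,1}=u A_{i,u}$, whence $\dot Y_i=A_{i,u}\,u\,\sin(2\pi u Y_i)$, which is exactly the mode-$u$ equation for component $i$.

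The main obstacle here is bookkeeping rather than anything conceptual: one must track the time dilation $q_i$ and the phase shift $p_i$ simultaneously and confirm that the integer part cancels through periodicity while the fractional part fixes the correct initial datum. Once the candidate $Y_i$ is verified to satisfy both the initial condition and the ODE, and since the right-hand side of \eqref{sys:n_1d} is smooth and hence locally Lipschitz, the Picard--Lindel\"of uniqueness theorem forces $Y_i\equiv X^u_i$, completing the proof coordinatewise and thus for the full state.
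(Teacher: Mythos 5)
Your proof is correct. The paper states Lemma~\ref{lem:Xu X1} without providing any proof, so there is nothing to compare against; your direct verification --- checking the initial condition, differentiating the candidate $Y_i$, using $2\pi$-periodicity to cancel the integer shift $p_i$, confirming the prefactor collapses to $uA_{i,u}$, and invoking Picard--Lindel\"of uniqueness --- is the natural argument and is complete. The only cosmetic gap is the boundary case $\xi_{0,i}=1$ (where $p_i=u$, $r_i=0$), which your formula still handles correctly since $0$ and $1$ are equilibria of both modes.
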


\begin{remark}
Lemma \ref{lem:Xu X1} states that $X^u$ can be obtained by shifting and rescaling $X^1$ in addition to a time reparameterization. Equivalently, the integral curves under mode $u$ consist of a periodic repetition of a scaled version of the integral curves under the base frequency. Note that the constants $A_{i,u}$ only appear in the time reparameterization; thus, they do not affect the controllability of the system. 
However, by adjusting input power, one can speed up or slow down the movement of particles due to the fact that the energy $E_{{\rm ac},u}$ is proportional to input power.
For this reason, we assume $A_i$ are independent of the control input.
\end{remark}

We finish this subsection by mentioning a symmetry property of the base frequency ($u=1$) model, which will be useful for controllability analysis.


\begin{lemma} \label{lem:symm}
    The integral curves of the model are reflection-symmetric with respect to the hyperplanes $x_i = 1/2, \, \forall i \in {\rm I}_n$.
\end{lemma}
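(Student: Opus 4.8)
The plan is to recast the claimed reflection symmetry as an \emph{equivariance} of the base-frequency vector field. Write the $u=1$ dynamics as $\dot x = f(x)$ with $f(x) = (A_1\sin(2\pi x_1),\dots,A_n\sin(2\pi x_n))^\top$, and for each $i\in\mathrm{I}_n$ let $R_i$ be the reflection across the hyperplane $x_i=1/2$, i.e. the affine involution sending $x_i\mapsto 1-x_i$ and fixing every other coordinate. Its linear part is $DR_i = \mathrm{diag}(1,\dots,-1,\dots,1)$ with the $-1$ in slot $i$. The key standard fact I would invoke is: if $f(R_i x)=DR_i\,f(x)$ for all $x$, then $R_i$ carries integral curves to integral curves with time preserved, which is exactly the asserted reflection symmetry.

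The core step is the verification of this equivariance, and it reduces to a single trigonometric identity. Since $\sin\bigl(2\pi(1-s)\bigr)=\sin(2\pi-2\pi s)=-\sin(2\pi s)$, the $i$-th component satisfies
\[
f_i(R_i x)=A_i\sin\bigl(2\pi(1-x_i)\bigr)=-A_i\sin(2\pi x_i)=-f_i(x),
\]
which matches the sign flip supplied by $DR_i$, while for $j\neq i$ the coordinate is untouched and $f_j(R_i x)=f_j(x)$. Hence $f\circ R_i = DR_i\circ f$. To confirm the consequence directly, if $t\mapsto X(t)$ solves $\dot X=f(X)$ then
\[
\tfrac{d}{dt}\,R_i\bigl(X(t)\bigr)=DR_i\,\dot X(t)=DR_i\,f\bigl(X(t)\bigr)=f\bigl(R_i X(t)\bigr),
\]
so $R_i\circ X$ is again a solution. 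Thus the image of every integral curve under $R_i$ is an integral curve, establishing symmetry about $x_i=1/2$; repeating for each $i$ and composing gives invariance under the full reflection group generated by the $R_i$. Note also that each $R_i$ maps the cube $[0,1]^n$ onto itself, so the argument stays inside the domain, and by Lemma \ref{lem:Xu X1} the same structure transports to any mode $u$ up to shifting and rescaling.

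I do not expect a genuine obstacle here; the lemma is essentially a symmetry bookkeeping statement. The only point requiring care is distinguishing a \emph{time-preserving} symmetry ($f\circ R_i=DR_i\circ f$) from a time-reversal ($f\circ R_i=-DR_i\circ f$): the correct alternative is decided entirely by the sign in the $\sin$ identity above, and getting this sign right is what guarantees that reflected solutions are forward-time solutions rather than reversed ones. A secondary detail to state cleanly is the precise meaning of ``reflection-symmetric integral curves'' as set invariance of the family of curves under each $R_i$, so that the equivariance computation is recognized as the complete proof.
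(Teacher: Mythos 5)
Your proof is correct, and it takes a mildly different route from the paper's. You verify directly that the base-frequency vector field is equivariant under the reflection $R_i\colon x_i\mapsto 1-x_i$, i.e. $f\circ R_i = DR_i\circ f$ via $\sin\bigl(2\pi(1-s)\bigr)=-\sin(2\pi s)$, and then push solutions forward through $R_i$. The paper instead doubles the system: it takes two solutions $\alpha,\beta$ of the same scalar equation, applies the sum-to-product identity to get $\tfrac{d}{dt}(\alpha+\beta)=2A_i\sin\bigl(\pi(\alpha+\beta)\bigr)\cos\bigl(\pi(\alpha-\beta)\bigr)$, and observes that the level set $\alpha+\beta=1$ is invariant because the right-hand side vanishes there. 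The two arguments hinge on the same elementary fact (oddness of $\sin$ about $\pi$), but yours isolates the symmetry as a property of the vector field itself and makes the time-preserving (as opposed to time-reversing) nature of the symmetry explicit, which is a genuine point of care you rightly flag; the paper's version instead leans on an invariant-set/uniqueness argument for the coupled $(\alpha,\beta)$ system (its footnote). Your closing remark that Lemma \ref{lem:Xu X1} transports the symmetry to higher modes is a useful addition not present in the paper's proof. No gaps.
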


\begin{proof}
    Let $\alpha(t)$ and $\beta(t)$ be the solutions to $\dot{x}_i = A_i \sin(2\pi x_i)$ with initial conditions $\alpha(0)$ and $\beta(0)$ respectively.
    In view of
   \begin{align*}
        \frac{d(\alpha+\beta)}{dt}& = A_i (\sin(2\pi \alpha)
         + \sin(2\pi \beta )) \\
        & = 2A_i \sin\left( \pi (\alpha + \beta)\right) 
        \cos \left( \pi (\alpha - \beta) \right),
    \end{align*}
    we see that if $\alpha(0)+\beta(0) = 1$, meaning that the initial conditions are reflection symmetric w.r.t. $x_i = \frac{1}{2}$, then $\alpha(t)+\beta(t)=1$ for all $t\ge 0$. \footnote{For ordinary differential equation $\dot{x}=f(x,t)$, if $f(x_*,t) = 0$, $\forall t\ge 0$ for some $x_*$, then $x(t, x_*) = x_*, \, \forall t\ge 0.$ }


\end{proof}

\section{Global Controllability Analysis} \label{sec:controllability}
\label{Sec:ControllabilityAnalysis}
After laying the foundations for modeling and problem formulation in Section \ref{model}, we are now in a position to study controllability.

\subsection{Assignable stable equilibria} \label{subsec:equilibrium}

Instead of characterizing controllability in the whole space $[0,1]^n$, we found it more tractable to study controllability on a discrete set of points called the assignable stable equilibria.
An assignable equilibrium $x^{*}=(x_{1}^{*},\cdots,x_{n}^{*})$ corresponds
to the solution of the equation $\sin(2\pi ux_{i}^{*})=0$, $\forall i\in{\rm I}_{n}$
and some $u\in{\rm I}_{N}$. We denote $E_{k}$ as the set of stable
equilibria when $u=k$, which can be explicitly computed as
\[
E_{k}=\left\{ \left(\frac{2i_{1}-1}{2k},\frac{2i_{2}-1}{2k},\cdots,\frac{2i_{n}-1}{2k}\right):i_{j}\in{\rm I}_{k},\;j\in{\rm I}_{n}\right\}.
\]
Thus, for mode $k$, there are $k^{n}$ stable equilibria. The set
of assignable stable equilibria is thus $E^{N}=\cup_{k=1}^{N}E_{k}$.
For convenience, let us write 
\begin{equation}
E_{k}(i_{1},\cdots,i_{n})=\left(\frac{2i_{1}-1}{2k},\frac{2i_{2}-1}{2k},\cdots,\frac{2i_{n}-1}{2k}\right).
\end{equation}
The region of attraction (ROA) for equilibrium $E_{k}(i_{1},\cdots,i_{n})$ under mode $k$ is the open cube 
\begin{equation}
\begin{aligned}
&\; R_{k}(i_{1},  \cdots,i_{n}) \\
& =\left]\frac{i_{1}-1}{k},\frac{i_{1}}{k}\right[\times\left]\frac{i_{2}-1}{k},\frac{i_{2}}{k}\right[\times\cdots\times\left]\frac{i_{n}-1}{k},\frac{i_{n}}{k}\right[.
\end{aligned}
\end{equation}
If $q\in E^{N}$, denote $R_{k}(q)$ the ROA of $q$ under mode
$k$.

\subsection{The controllability graph}

Given $p\in [0,1]^n$, and an equilibrium $q\in E^{N}$, if there exists
a mode $k\in{\rm I}_{N}$, such that $p\in R_{k}(q)$, then by applying mode
$k$, $p$ will converge to the equilibrium $q$ asymptotically.

Based on this observation, we can build a directed graph, denoted as $G_N$, to describe the reachability between different stable equilibria. The vertex set of $G_N$ is $E^{N}$ and,
given two equilibria $p,q\in E^{N}$, if there exists a mode $k\in \mathrm{I}_N$ that
renders $p$ to $q$ asymptotically, we say that $(p,q)$
is an edge of $G_{N}$. 
As shown in Fig. \ref{fig:steady_switch}, the two stable assignable equilibria
$E_3(1,3)$ and $E_2(1,2)$ lie in the intersection of $R_3(1,3)$ and $R_2(1,2)$. Therefore,
$E_3(1,3)$ can be rendered to $E_2(1,2)$ by applying $u=2$ and vice versa. As a
consequence, there are two edges with different directions between $E_3(1,3)$ and $E_2(1,2)$, see Fig. \ref{fig:steady_switch}. 
We call $G_N$ the {\it controllability graph}.

\begin{figure}[ht]
    \centering
    \includegraphics[scale=0.35]{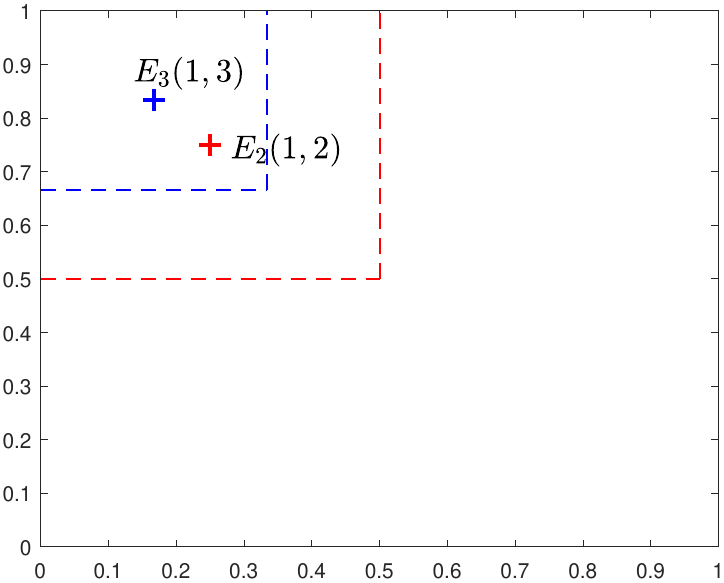}
    \includegraphics[scale=0.4]{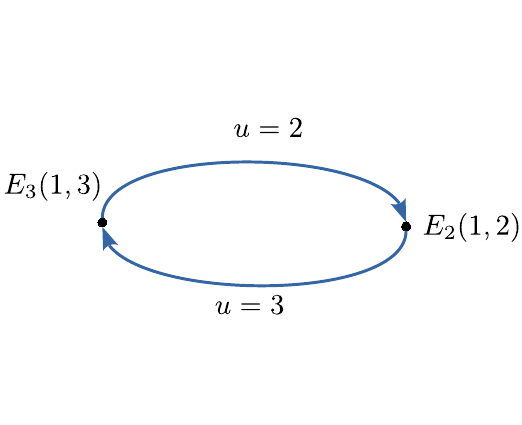}
    \caption{$E_3(1,3)$ and $E_2(1,2)$ are reachable from each other by switching between $u=3$ and $u=2$.}
    \label{fig:steady_switch}
\end{figure}

Note that the values of the coefficients $A_i$ do not play a role in the above analysis, but they do affect the controllability properties near the diagonal line.
Indeed, when $A_{1}=A_{2}=\cdots=A_{N}$, the diagonal line 
$\{x_1 = x_2 = \cdots = x_n \}$ is an invariant
set, implying that it is impossible to leave this set. 
This corresponds to the fact that
it is impossible to separate identical particles if they have the same initial conditions. Better controllability is enabled when $A_{i}$'s become different.

\begin{figure}[ht]
\centering
\includegraphics[scale=0.5]{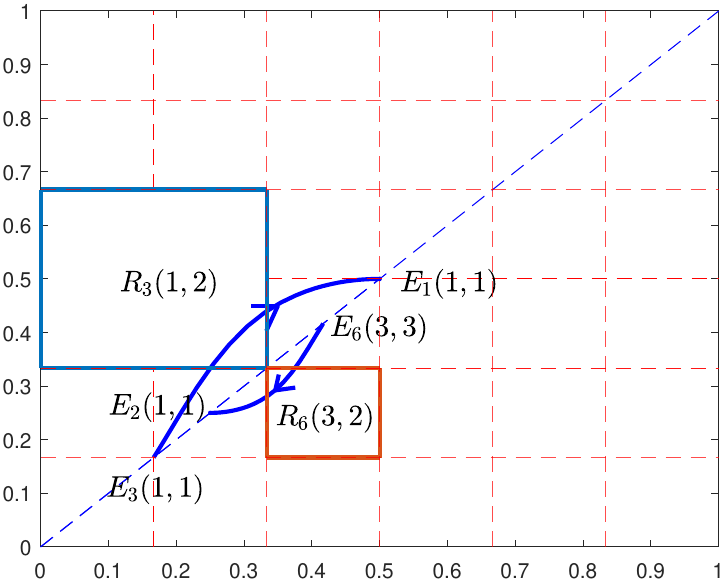}
\caption{
$u=1$ drives $E_3(1,1)$ to $E_1(1,1)$ by passing through $R_3(1,2)$. $u=2$ drives $E_6(3,3)$ to $E_2(1,1)$ by passing through $R_6(3,2)$.
\label{fig:u1,N3}}
\end{figure}

Fig. \ref{fig:u1,N3} plots (in blue) two trajectories that exit the diagonal line, showing it is possible to leave and enter the diagonal line, and hence allows adding extra links to the controllability graph. 
For example, when the system is at initial state $E_3(1,1)$,
by applying mode $u=1$, the state will be rendered to the
equilibrium $E_{1}(1,1)$ asymptotically, while passing
through $R_{3}(1,2)$ at a certain time $T$. By switching to mode $3$ at time $T$, the
state will be rendered to $E_3(1,2)$ instead. 
Therefore, node $E_{3}(1,2)$ can be connected 
by an edge from $E_{3}(1,1)$. Similarly, by symmetry, $E_{6}(3,3)$ connects
to $E_{6}(3,2)$. 

Finally, we remark that the lines $x_i = 1/2$ are always invariant regardless of the values of $A_i$. Therefore, we shall restrict controllability discussions within regions such as $0<x_i<1/2$.

\subsection{Controllability graph of a two-particle system}
We are now ready to construct the controllability graph of a two-particle system -- a system
that is simple but illustrative enough to assess the controllability of the device.

Two steps are needed to construct the graph. In Step 1, we draw an edge from each node to every other node that is reachable under a single control input $u\in \mathrm{I}_N$.
In Step 2, the dynamics on the diagonal lines are taken into consideration, so that links
from the diagonal line to non-diagonal states are established.

Once the graph has been obtained, we decompose the graph into {\it strongly connected components} (SCC). By SCC, we mean that each point in the component is reachable asymptotically from every other node under a sequence of control inputs. The decomposition can be done using standard algorithms, e.g., Tarjan's algorithm.

Fig. \ref{fig:noDiag} shows the SCCs for $N=6, \, 8, \, 9,  \, 12$ without considering the ``diagonal effects''. 
The figure suggests that as the number of modes, i.e., $N$ increases, the number of disconnected components decreases. In particular, when $N=9$, all assignable stable equilibria within the region $\{ 0<x_1<1/2, \; x_2 < x_1 \}$ are reachable from each other. The figure also confirms the symmetry described in Lemma \ref{lem:symm}.

\begin{figure}[ht]
    \centering
    \includegraphics[scale=0.3]{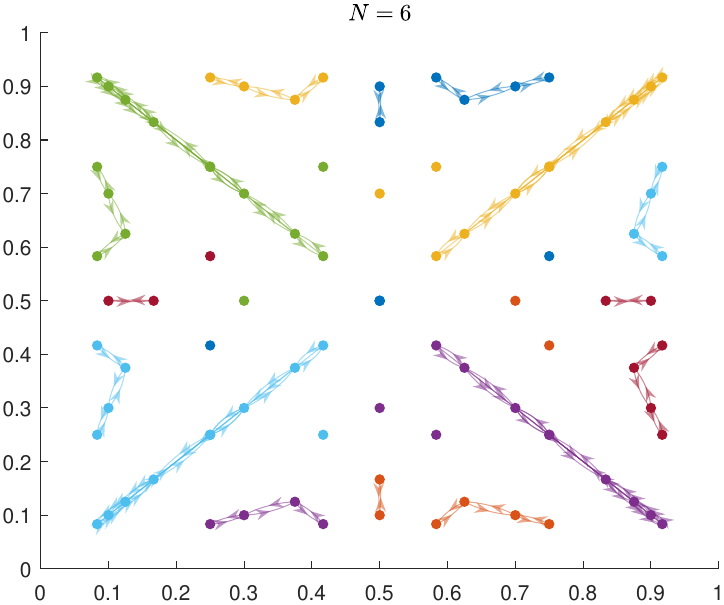}
    \includegraphics[scale=0.3, page=2]{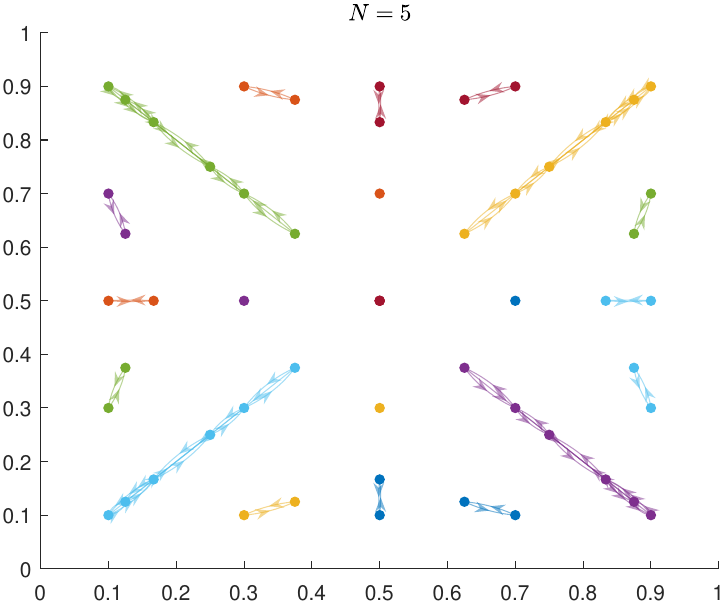}
    \includegraphics[scale=0.3, page=3]{pic/noDiagN589-cropped.pdf}
    \includegraphics[scale=0.3]{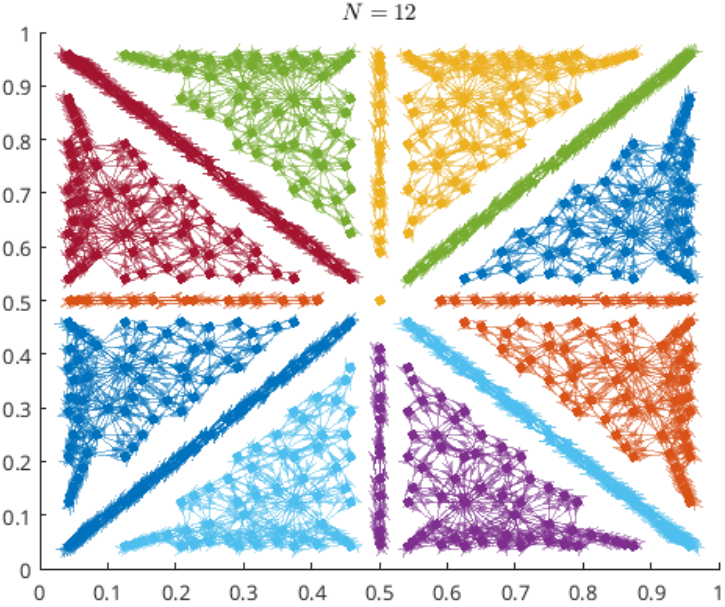}
    \caption{Strongly connected components of the reachability graph obtained by pattern switching between steady states. Available modes from left to right: $N=6, \, 8, \, 9, \, 12$.}
    \label{fig:noDiag}
\end{figure}


Next, by adding links emitting and entering the diagonal line, better controllability properties are gained. Fig. \ref{fig:with-diag N9} shows the SCCs of the graph in $]0, 1/2[$ obtained by adding the four extra links shown in Fig. \ref{fig:u1,N3}, i.e., $E_3(1,1) 
\to E_3(1,2)$, $E_3(1,2) \to E_1(1,1)$, $E_6(3,3) \to E_6(3,2)$, $E_6(3,2) \to E_2(1,1)$. Note that, by only adding four extra edges, the graph becomes strongly connected. 

\begin{figure}[ht]
    \centering
    \includegraphics[scale=0.3]{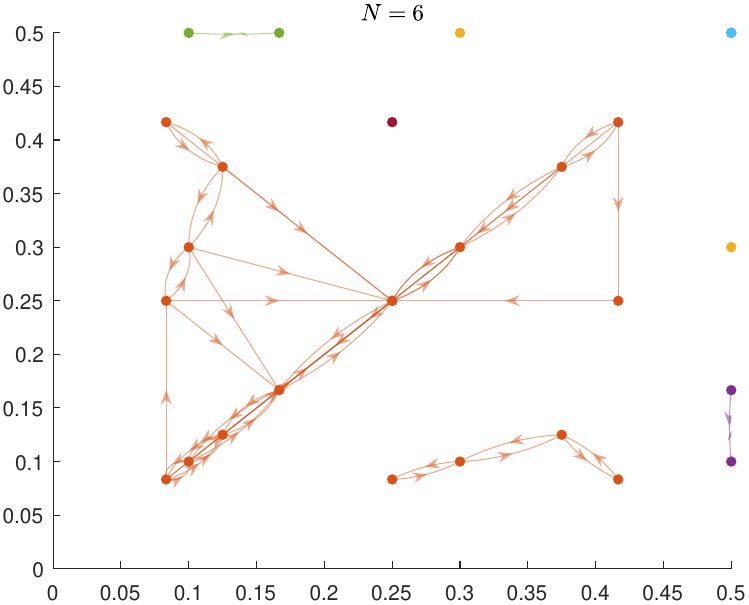}
    \includegraphics[scale=0.3]{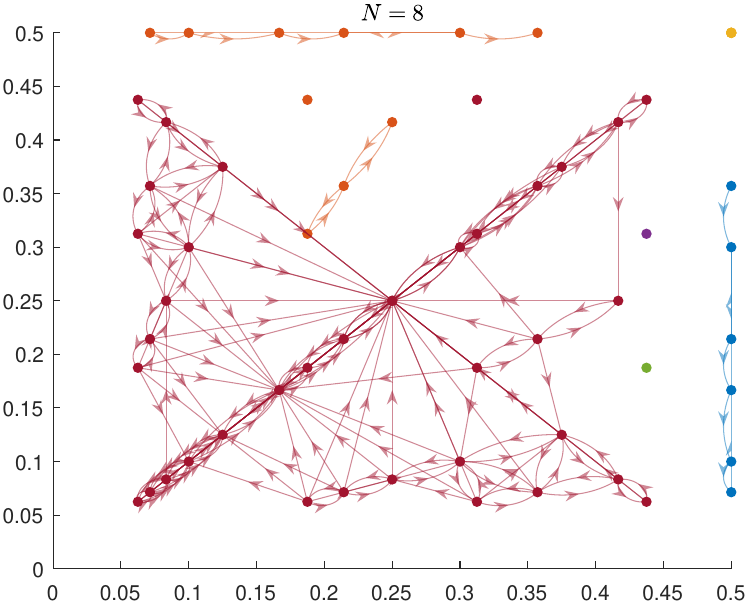}
    \includegraphics[scale=0.3, page=1]{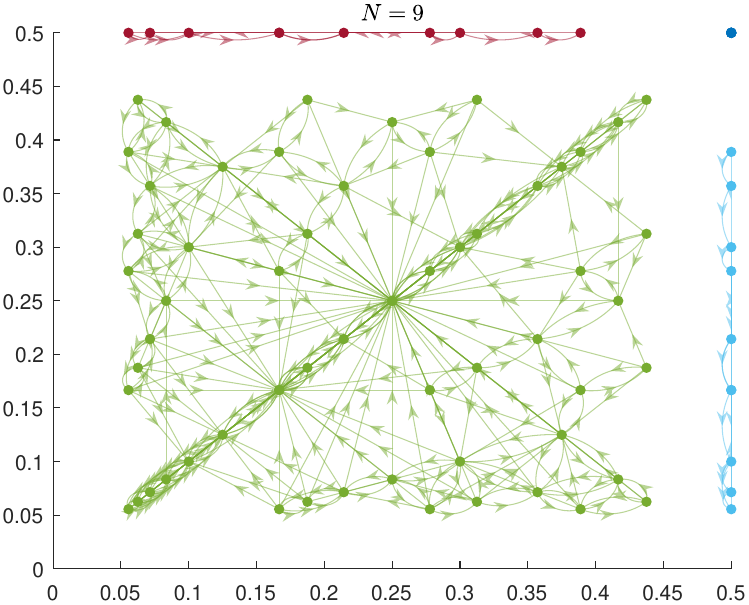}
    \includegraphics[scale=0.3, page=2]{pic/withDiag-cropped.pdf}
    \caption{Diagonal effects added. From left to right: $N=6, 8, 9, 12$.}
    \label{fig:with-diag N9}
\end{figure}

Thanks to Lemma \ref{lem:Xu X1}, we have the following theorem.
\begin{theorem}\label{thm:2part}
    If $A_1 \ne A_2$, then as $N \to \infty$, the reachability sets within $R_2(i,j)$, for $i,j=1,2$, are dense subsets. 
\end{theorem}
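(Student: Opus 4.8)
The plan is to prove the statement for a single quadrant and then invoke symmetry. Since the lines $\{x_i = 1/2\}$ are invariant for every mode, each region $R_2(i,j)$ is dynamically isolated, and by Lemma~\ref{lem:symm} all four are reflection-equivalent; hence it suffices to show that the reachable equilibria are dense in $R_2(1,1) = \,]0,1/2[\,\times\,]0,1/2[\,$. I would first record the elementary \emph{density of equilibrium positions}: the first coordinates of assignable stable equilibria lying in the quadrant form the set $\{(2i-1)/(2k) : 1\le i,\; 2i-1<k,\; k\le N\}$, whose points are $1/k$-spaced, so as $N\to\infty$ they become dense in $\,]0,1/2[\,$; consequently $E^N\cap R_2(1,1)$ is dense in the quadrant. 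The whole difficulty is therefore \emph{not} the existence of nearby equilibria, but showing that a dense subset of them lies in one strongly connected component of $G_N$.

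For reachability I would build the argument from a \emph{flow-and-capture} primitive: applying a mode $u$ for a finite time $\tau$ moves the state along the mode-$u$ integral curve, and switching to a fine mode $m$ at time $\tau$ then snaps the state to the center of the mode-$m$ cell in which it currently sits, i.e.\ to a genuine vertex of $G_N$ within $1/(2m)$ of the captured point. At the level of a single coordinate this already yields approximate controllability: the equation $\dot x_i = A_i u \sin(2\pi u x_i)$ depends on $x_i$ alone, mode $1$ sweeps $x_i$ monotonically toward $1/2$, and the descending ladder of modes $2,3,4,\dots$ (each flowing toward its left-most equilibrium $1/(2k)$) sweeps $x_i$ toward $0$; capturing along these sweeps at a high mode places $x_i$ near any prescribed target, so the one-coordinate reachable equilibria are dense in $\,]0,1/2[\,$.

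The main obstacle is that a single mode acts on \emph{both} coordinates at once, so the two one-dimensional constructions must be reconciled. Here I would use two devices. First, to detach from the diagonal I invoke the hypothesis $A_1\neq A_2$ together with the transient argument already illustrated in Fig.~\ref{fig:u1,N3}: from a diagonal equilibrium, mode $1$ drives both coordinates toward $1/2$ but at \emph{different rates}, so the trajectory leaves the invariant diagonal, and capturing it at a suitable fine mode lands on an off-diagonal vertex; this supplies the edges connecting diagonal equilibria to the rest of the graph. Second, once the state sits at a mode-$m$ equilibrium, every multiple $jm$ of $m$ satisfies $\dot x_1 = A_1 jm\sin\!\big(\pi j(2i-1)\big)=0$, so applying such modes \emph{freezes} $x_1$ while letting $x_2$ flow, and symmetrically for $x_2$; this decoupling is what turns the two independent one-dimensional constructions into genuine two-dimensional steering.

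The delicate point --- which I expect to be the crux of a fully rigorous proof --- is that freezing $x_1$ via multiples of $m$ also pins $x_2$ inside its current mode-$m$ cell, since the walls $s/m$ are equilibria of every such mode. The remedy is to organize the control \emph{coarse-to-fine}, exploiting the self-similarity of Lemma~\ref{lem:Xu X1}: given a target box at scale $m$ around $(t_1,t_2)$, I would first steer the pair into the correct coarse cell, then refine within it using higher modes, and only at the very end apply mode $m$ to snap both coordinates simultaneously to the cell centers $\approx(t_1,t_2)$. Letting $m\to\infty$ (hence $N\to\infty$) makes the captured equilibria approximate every point of $R_2(1,1)$, and pulling this back through Lemma~\ref{lem:symm} to the other three quadrants yields density of the reachability sets in every $R_2(i,j)$.
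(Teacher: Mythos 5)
Your reduction to $R_2(1,1)$, the use of $A_1\neq A_2$ to escape the invariant diagonal, and the closing coarse-to-fine refinement via the self-similarity of Lemma~\ref{lem:Xu X1} all match the paper's strategy. But there is a genuine gap exactly where you flag ``the crux'': you never establish the base case that the pair can be steered between arbitrary coarse cells. The paper closes this with an explicit, computationally verified fact (Fig.~\ref{fig:with-diag N9}): with modes from ${\rm I}_9$ plus the four diagonal links of Fig.~\ref{fig:u1,N3}, the assignable stable equilibria in $]0,1/2[^2$ form a single strongly connected component. Its proof is then purely recursive: by Lemma~\ref{lem:Xu X1}, the modes $\{2,4,\dots,18\}\subseteq{\rm I}_{18}$ reproduce inside each quarter $R_4(i,j)$ the mode-${\rm I}_9$ graph at half scale, the four quarters are then connected to one another using ${\rm I}_9$, and one iterates. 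Your substitutes for this base case do not do the job. The one-dimensional sweeps act on both coordinates simultaneously, so they only generate the graph edges already under consideration and give no independent positioning. The freezing device is the more serious problem: it is true that pinning $x_1$ at $(2i-1)/(2m)$ and applying multiples of $m$ leaves $x_1$ fixed, but, as you yourself observe, every wall $s/m$ is then an equilibrium for $x_2$, so $x_2$ can only move \emph{within} its current mode-$m$ cell; alternating the roles of the two coordinates only confines the state to finer and finer cells and can never transport the pair into a different coarse cell. Hence ``first steer the pair into the correct coarse cell'' is precisely the unproven statement, and nothing in your argument supplies it.

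A secondary, fixable issue: exact freezing requires $x_1$ to sit exactly on a mode-$m$ equilibrium, which capture-by-convergence achieves only asymptotically, so a continuity or robustness estimate would be needed to make that step rigorous. The substantive defect, however, is the missing strong-connectivity base case; if you import the paper's ${\rm I}_9$ computation (or prove an analytic replacement for it), your recursive refinement then goes through essentially as in the paper.
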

\begin{proof}
    It is sufficient to consider $R_2(1,1)$. Partition $R_2(1,1)$ into four pieces: $R_4(i,j)$, $i,j \in {\rm I}_2$. In each of the four regions, one can use controls from ${\rm I}_{18}$ to form a strongly connected component of the assignable stable equilibrium points. After that, the four components can be connected using controls from ${\rm I}_9$. Repeating this procedure, we obtain a dense strongly connected component in $R_2(1,1)$.
\end{proof}

\begin{remark}
    Theorem \ref{thm:2part} can be extended to arbitrarily many particles easily. We report this in a coming work.
\end{remark}


\section{Local Controllability Analysis}

In this section, we study another notion of controllability, namely, {\it local controllability}. Unlike global controllability in Section \ref{Sec:ControllabilityAnalysis}, which is studied over assignable stable equilibria, we study local controllability in the whole space $[0,1]^n$. For simplicity and visualization purposes, we restrict to $n=2$.

Our study of local controllability is based on mode mixing, a
concept that is closely related to relaxations of nonlinear systems. 

\subsection{Relaxation by Mode Mixing}


For the system \eqref{sys:n_1d}, when the available input modes are fixed, say $u \in {\rm I}_N$, then the system \eqref{sys:n_1d} can be expressed as a differential inclusion
\begin{equation} \label{sys:diff_inc:n_1d}
    \dot{x} \in F(x),
\end{equation}
where \( F(x) = \{ f_1(x), \cdots, f_N (x) \} \) with 
\begin{equation}
    f_u (x) = 
    \begin{bmatrix}
        A_1 u \sin(2\pi u x_1) \\
        \vdots \\
        A_n u \sin(2\pi u x_n)
    \end{bmatrix}
\end{equation}

Mode mixing amounts to relaxing the system \eqref{sys:diff_inc:n_1d} by 
\begin{equation} \label{sys:relax:n_1d}
    \dot {x} \in \cco (F(x))
\end{equation}where $\cco (F(x))$ is the convex hull spanned by $F(x)$.

The celebrated Filippov-Ważewski theorem asserts that the reachability region of the system \eqref{sys:relax:n_1d} is dense in the reachability set of the system \eqref{sys:diff_inc:n_1d}.
\begin{theorem}[Filippov-Ważewski \cite{aubinDifferentialInclusionsSetvalued1984}]
Suppose that $f_j$ are Lipschitz for $j\in {\rm I}_N$. Let $x(t)$ be a bounded solution to \eqref{sys:diff_inc:n_1d} on $[0,T]$, then for any $\epsilon>0$, there exists a solution $y(t)$ to \eqref{sys:relax:n_1d} with $y(0)=x(0)$, such that $|x(t) - y(t)|<\epsilon$ for all $t\in ]0,T]$.
\end{theorem}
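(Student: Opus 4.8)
The plan is to produce the required relaxed trajectory $y$ by direct exhibition rather than through any limiting or approximation scheme: I will show that the given finite-mode solution $x$ of \eqref{sys:diff_inc:n_1d} is \emph{already} a solution of the relaxed inclusion \eqref{sys:relax:n_1d}, so that the choice $y := x$ satisfies every clause of the statement at once. The mechanism behind this is that \eqref{sys:relax:n_1d} is obtained from \eqref{sys:diff_inc:n_1d} purely by enlarging the right-hand side pointwise, from the finite set $F(\xi)$ to its convex hull $\cco(F(\xi))$, and enlarging the set of admissible velocities can only add trajectories, never remove them.

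First I would pin down the solution concept shared by both inclusions: a solution on $[0,T]$ is an absolutely continuous curve $x:[0,T]\to[0,1]^n$ satisfying the respective inclusion for almost every $t$. With this fixed, the single pointwise fact I need is the containment $F(\xi)\subseteq\cco(F(\xi))$ for every $\xi\in[0,1]^n$, which holds because the convex hull of any set contains that set.

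Next I would combine the two observations. For the given bounded solution $x$ of \eqref{sys:diff_inc:n_1d} we have, for almost every $t\in[0,T]$, $\dot x(t)\in F(x(t))\subseteq\cco(F(x(t)))$; hence the very same absolutely continuous curve $x$ satisfies \eqref{sys:relax:n_1d} with the same value at $t=0$. Setting $y:=x$ then yields a solution of \eqref{sys:relax:n_1d} with $y(0)=x(0)$ and $|x(t)-y(t)|=0<\epsilon$ for every $\epsilon>0$ and every $t\in]0,T]$, which is precisely the assertion.

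Finally, the main point needing care — rather than a genuine analytic obstacle — is to confirm that the two inclusions are posed over the same class of curves (absolutely continuous, with the inclusion holding almost everywhere), so that a curve admissible for the smaller right-hand side is literally a candidate for the larger one. The Lipschitz hypothesis on the $f_j$ is what guarantees that the finite-mode trajectory $x$ exists and is well defined from its initial condition, but it is not invoked in the containment step itself; once the solution concepts are matched, the conclusion follows at once from $F\subseteq\cco F$.
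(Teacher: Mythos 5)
Your argument is formally valid for the statement exactly as printed: since $F(\xi)\subseteq\cco(F(\xi))$ for every $\xi$, any solution of \eqref{sys:diff_inc:n_1d} is verbatim a solution of \eqref{sys:relax:n_1d}, and $y:=x$ satisfies every clause with $|x(t)-y(t)|=0$. But you should have trusted the red flag you yourself noticed: in your proof the Lipschitz hypothesis, the boundedness of $x$, and the quantifier over $\epsilon$ are all idle, which means the statement you proved is vacuous and cannot be the ``celebrated'' theorem being invoked. What has happened is that the paper's transcription from \cite{aubinDifferentialInclusionsSetvalued1984} has the two inclusions transposed. The Filippov--Wa\.{z}ewski relaxation theorem runs in the opposite, non-trivial direction: every solution of the convexified inclusion \eqref{sys:relax:n_1d} can be approximated, uniformly on $[0,T]$ and with the same initial condition, by solutions of the original inclusion \eqref{sys:diff_inc:n_1d}. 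That reversed direction is also the one the paper actually needs: the sentence following the theorem (``it is sufficient in practice to study the \dots convexified system'') transfers reachability conclusions from the relaxed system back to the physically implementable switched system, which is precisely the hard implication; the direction you proved gives the containment of reachable sets the wrong way around for that purpose.

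For the intended statement your construction fails at the first step: a relaxed velocity $v\in\cco(F(\xi))\setminus F(\xi)$ is never admissible for \eqref{sys:diff_inc:n_1d} at any single instant, so no identity (or pointwise) assignment can work; closeness must be achieved in the integral, not in the velocity. The standard proof partitions $[0,T]$ finely and, on each small interval, replaces the convex combination $\sum_{j} w_j(t) f_j$ realized by the relaxed trajectory with rapid switching among the modes $f_j$, allotting to each mode a time fraction matching its weight $w_j$ (a measurable-selection/chattering construction), so that the primitive of the velocity --- and hence the trajectory --- is tracked to within any prescribed $\epsilon$; the Lipschitz constants of the $f_j$ then enter through a Gronwall estimate that keeps the accumulated per-interval errors from being amplified over $[0,T]$. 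None of these ingredients (fast switching, averaging, Gronwall) appears in, or can be extracted from, the containment $F\subseteq\cco(F)$, so as a proof of the theorem the paper means to cite, the proposal has a genuine gap.
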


In view of the Filippov-Ważewski theorem, it is sufficient in practice to study the reachability and controllability properties of the relaxed/covexified system, which is much better behaved, as the integer-type constraints are now replaced by convex constraints. More precisely, we can write the system \eqref{sys:relax:n_1d} as
\begin{equation} \label{sys:convex:n_1d}
    \dot{x} = \sum_{j=1}^N w_j(t) f_j(x)
\end{equation}where $w_j(t)$ are the control inputs which take values in the $N$-simplex $\Delta_N = \{ w\in \mathbb{R}^N_{\ge 0} : \sum_{j=1}^N w_j = 1, \, w_j \ge 0, \, \forall j \in {\rm I}_N \}$.

We call the system \eqref{sys:convex:n_1d} ``mode mixing'' in that the instantaneous control input is a mixture of different available modes. To recover a true control input, i.e., the sequence of mode switching, which approximates the system dynamics well, we can use fast switching between the available modes. 

\subsection{Regions of locally controllable states}
\label{Sec:locallyControl}


We rely on (\ref{eq:dynamics}) to simulate $\dot{x}$ under mode mixture $w$ in a two-particle system with given physical properties (e.g., particle radius). Using this simulation, we hope to find simply connected regions in the state-space consisting solely of states that can freely move in any direction around themselves. 
These regions are governed by a strict notion of controllability, since all of its states are guaranteed to reach each other by any path contained in them.

{A state $x$ is called {\it locally controllable} if it lies in} the interior of the convex hull of $F(x)$, i.e. \footnote{We adopt the convention that an element in $F(x)$ is a tangent vector that ``sits'' at $x$, and that $\cco (F(x))$ is a convex set that ``sits'' at $x$. \eqref{eq:loc-ctrl} would have been replaced by $0 \in {\rm int}\, \cco (F(x))$ if the elements in $ F(x)$ are seen as vectors sitting at the origin.}
\begin{equation} \label{eq:loc-ctrl}
x \in \interior \cco (F(x)).
\end{equation}
{The naming is justified by the fact that the property of local controllability is commonly applied to a state x if, at x, the system is able to move instantaneously in every possible direction in the state-space.}
{Note that it is part of the claim in \eqref{eq:loc-ctrl} that $\cco (F(x))$ has non-empty interior.}

The number of modes, i.e.,  $N$, has a direct impact on the local controllability of a state, since it directly changes set $F(x)$. An example of how $N$ affects the local controllability of a state $x$ is given in Fig. \ref{fig:example_cvxhull}, where state $x$ only becomes locally controllable with the addition of one extra mode.

\begin{figure}
    \centering
    \includegraphics[width=\linewidth]{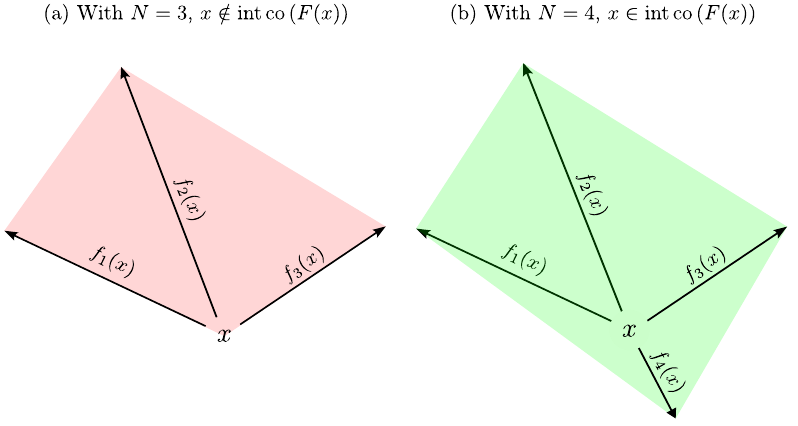}
    \caption{Example of how $N$ can influence the local controllability of a state $x$. \textbf{(a)} With $N=3$, $x$ is not locally controllable due to directional blind-spots around it. \textbf{(b)} The addition of $f_4(x)$ makes $x$ locally controllable.}
    \label{fig:example_cvxhull}
\end{figure}

Verifying the local controllability of individual states with our simulation is straightforward since the operation only requires a convex hull inclusion test. Discovering entire simply connected regions of these states, on the other hand, is a more daunting task, because we are working with a continuous state-space. Therefore, we only tested states from a thinly spaced grid that was overlaid on the space, and if a state was deemed locally controllable, we considered this property to be applied to the entire cell containing it, thus making it a simply connected region of locally controllable states. This assumption is only true if the grid cells are small enough that linearization can be sensibly applied inside them. For this process, we simulated the acoustic manipulation process with two particles with radii $a_1 = 1$ µm and $a_2 = 2$ µm dispersed in a 2-dimensional rectangular device with channel height $H=800$ µm. Fig. \ref{fig:grid} shows the results for $N=5$ and a grid spacing of $5$ µm, 2.5 times the size of the bigger particle, resulting in 25,281 points in the state-space. We found that 58.4\% of these points were locally controllable and that at least $N=10$ modes were necessary for this number to reach 80\%. Note how states $x$ in the symmetry lines are never deemed locally controllable due to the co-linearity of vectors $f_j(x)$, which makes $\cco \left( F(x) \right)$ devoid of an interior.

\begin{figure}
    \centering
    \includegraphics[width=0.6\linewidth]{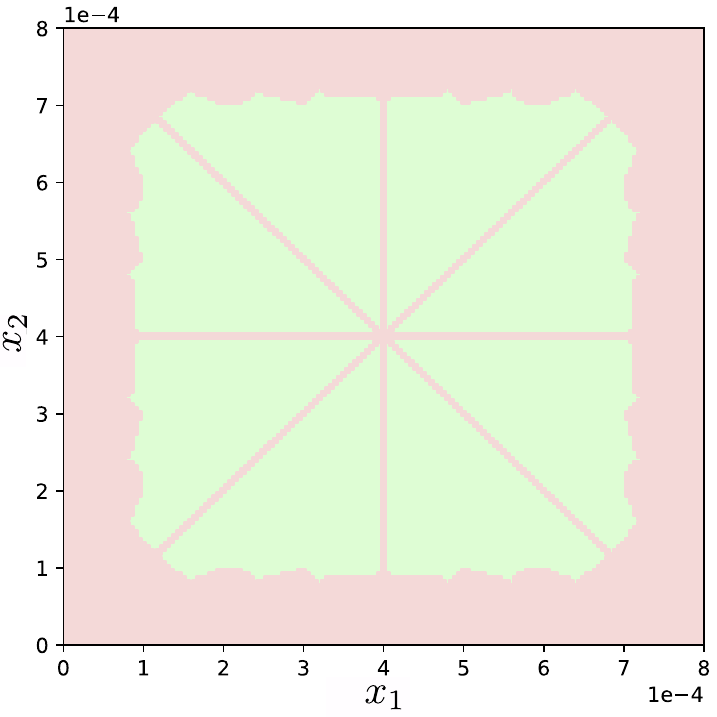}
    \caption{Approximation of simply connected regions of locally controllable states after the state space is discretized into a 159-by-159 grid. Locally controllable states are shown in green, while the remaining ones are shown in red. With $N=5$, 58.4\% of the states in the grid are locally controllable.}
    \label{fig:grid}
\end{figure}

So far, we restricted our investigation to two-particle systems, but real-life applications will generally require controlling more particles simultaneously. To understand how many modes our system would require to perform one-dimensional control for a set of $p$ particles, we repeated the grid approximation experiment with $p$ ranging from 2 to 10 and $N$ ranging from 2 to 20. Due to the curse of dimensionality, it is naturally unfeasible to verify the local controllability of all states in a $p$-dimensional grid if $p$ is too great and, therefore, we only tested a predefined number of states sampled from the grid for each $p$. The percentage of locally controllable states out of all samples was then used as a proxy for how controllable a system of $p$ particles and $N$ modes is. We used 3,000 samples for all $p$, and, for each new particle added to the system, we sampled its radius from a uniform distribution from $1$ µm to $2$ µm, with the remaining properties remaining the same as in the two-particle test (Fig. \ref{fig:grid}). Note that, from the definition of local controllability, a system with $p$ particles can only have locally controllable states if $N \geq p +1$, which is in line with refs \cite{zhou2016controlling,shaglwf2019acoustofluidic}. The result of these experiments (Fig. \ref{fig:multiple_part}) shows that, although the addition of modes for each $p$ increases the percentage of locally controllable states, such effect diminishes aggressively as $N$ becomes large. Moreover, the higher $p$ is, the weaker this effect is. Indeed, while $N=3$ modes yield a percentage of $\approx 30\%$ for a two-particle system, a ten-particle system remains shy of $~30\%$ even with $N=20$ modes.

\begin{figure}
    \centering
    \includegraphics[width=\linewidth]{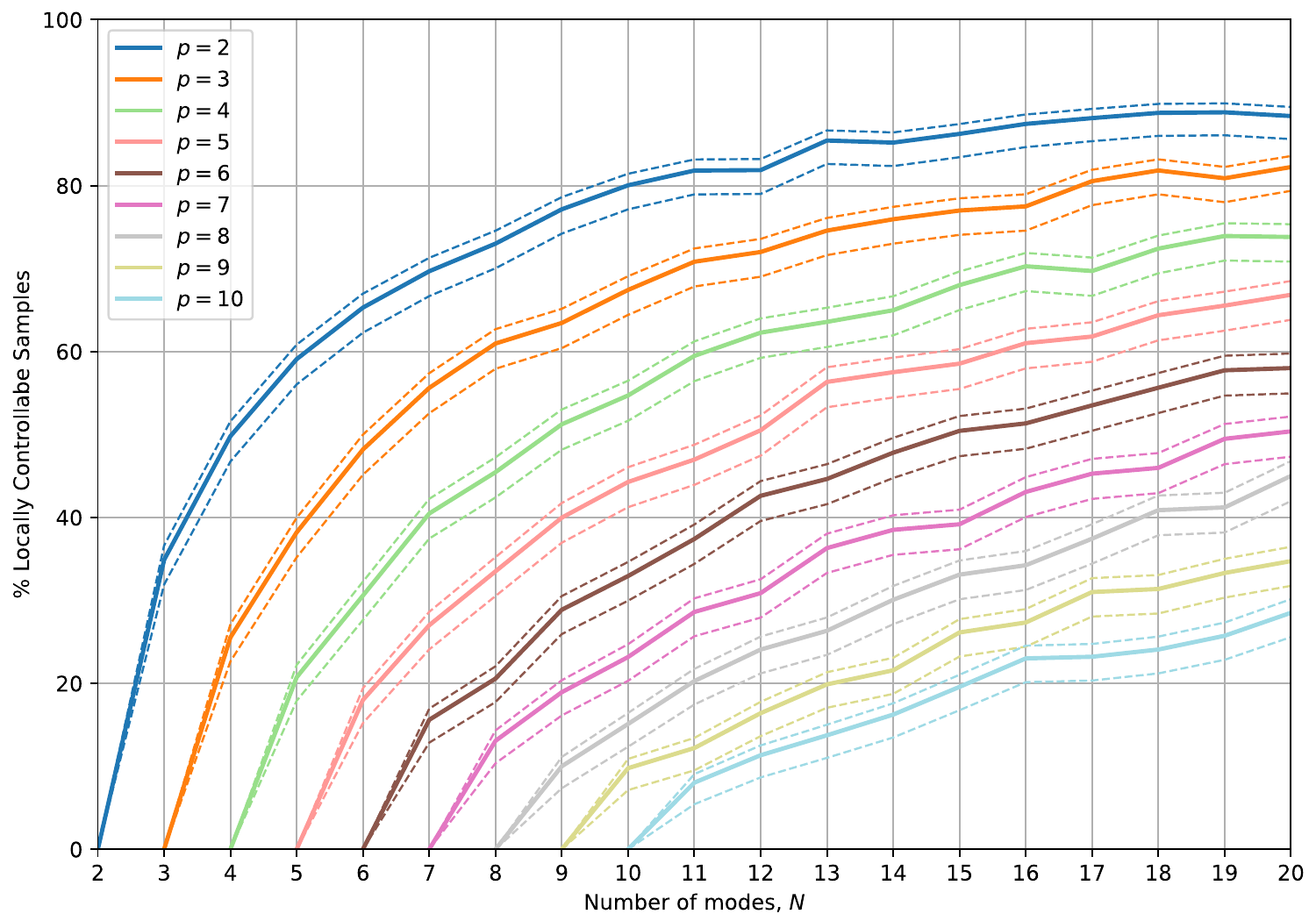}
    \caption{Plot showing the relationship between the percentage of locally controllable states and the number of modes, $N$, for each number of particles $p$. Each experiment is run on 3,000 states sampled from a $p$-dimensional grid. The solid lines denote the percentage of locally controllable states, while the dashed lines delimit the 95\% confidence interval estimate using the Wilson Score Interval calculation.}
    \label{fig:multiple_part}
\end{figure}

\section{Conclusion}

This study provides some theoretical controllability analysis of a one-dimensional acoustic manipulation device using multi-modal actuation. 
By modelling the system as a nonlinear control system, global and local controllability are rigorously defined. Globally, we construct a controllability graph to quantify global controllability among stable assignable equilibria, showing that a sufficient number of modes ensures dense reachability sets.
Locally, by employing mode mixing, we identify regions of controllable states, with simulations indicating that 10 modes achieve 80\% reachability in a two-particle system. These findings offer a theoretical foundation for acoustic manipulation for Lab-on-Chip applications. Future work will extend the analysis to multi-particle systems and experimental validation.
%
%

\bibliographystyle{plain}
\bibliography{acoustic}

\end{document}